\newtheorem{theorem}{Theorem}
\newtheorem{definition}{Definition}
\newtheorem{remark}{Remark}
\DeclareMathOperator{\Id}{Id}
\DeclareMathOperator{\ess}{ess}
\DeclareMathOperator{\ini}{in}
\DeclareMathOperator{\out}{out}
\DeclareMathOperator{\enab}{enab}
\begin{document}

\title{\LARGE \bf Dynamic Quantization based Symbolic Abstractions for Nonlinear Control Systems
\thanks{This work was supported by the H2020 ERC Starting Grant BUCOPHSYS, the EU H2020 Co4Robots Project, the Swedish Foundation for Strategic Research (SSF), the Swedish Research Council (VR) and the Knut och Alice Wallenberg Foundation (KAW).}
}

\author{Wei~Ren and Dimos V. Dimarogonas
\thanks{W. Ren and D. Dimarogonas are with Division of Decision and Control Systems, EECS, KTH Royal Institute of Technology, SE-10044, Stockholm, Sweden.
Email: \texttt{\small weire@kth.se}, \texttt{\small dimos@kth.se}.}
}

\maketitle

\begin{abstract}
This paper studies the construction of dynamic symbolic abstractions for nonlinear control systems via dynamic quantization. Since computational complexity is a fundamental problem in the use of discrete abstractions, a dynamic quantizer with a time-varying quantization parameter is first applied to deal with this problem. Due to the dynamic quantizer, a dynamic approximation approach is proposed for the state and input sets. Based on the dynamic approximation, dynamic symbolic abstractions are constructed for nonlinear control systems, and an approximate bisimulation relation is guaranteed for the original system and the constructed dynamic symbolic abstraction. Finally, the obtained results are illustrated through a numerical example from path planning of mobile robots.
\end{abstract}

\section{Introduction}
\label{sec-intro}

The use of discrete abstractions \cite{Milner1989communication, Tabuada2006linear} has gradually become a standard approach for the design of hybrid systems due to two main advantages. First, because of discrete abstractions of continuous dynamics, one can deal with controller synthesis problems efficiently via techniques developed in the fields of supervisory control \cite{Ramadge1987supervisory} or algorithmic game theory \cite{Ramadge1987modular}. Second, with an inclusion or equivalence relationship between the original system and the discrete abstraction, the synthesized controller is guaranteed to be correct by design, and thus formal verification is either not needed or can be reduced \cite{Girard2012controller}. To construct discrete abstractions, it is essential to find an equivalence relation on the state set of dynamic systems. Such equivalence relation leads to a new system, which is on the quotient space and shares the properties of interest of the original system.

In the literature on the construction of discrete abstractions, the commonly-used approach is based on (bi)simulation relation and its variants in \cite{Pola2008approximately, Girard2010approximately, Girard2007approximation}. The (bi)simulation relation and its variants lead to equivalences of dynamic systems in an exact or approximate setting. On the other hand, because of time-invariant quantization regions and the resulting simple structures \cite{Ren2018quantized}, static quantizers are applied in the construction of discrete abstractions \cite{Pola2008approximately, Girard2010approximately}. For instance, the uniform quantizer, which is a simple static quantizer \cite{Delchamps1990stabilizing}, is used extensively in the construction of discrete abstractions. Since the uniform quantizer partitions the state set with equal distance, a huge computational complexity may occur in the construction of discrete abstractions \cite{Pola2008approximately, Girard2010approximately}. To reduce the computational complexity, dynamic quantizers can be instead applied such that the state and input sets can be partitioned locally with different distances, which is the main motivation of this paper.

This paper studies symbolic abstraction for nonlinear control systems via dynamic quantization. To this end, we apply a dynamic quantizer to approximate the state and input sets, then construct dynamic abstraction for nonlinear control systems, and establish an approximate bisimulation relation between the original system and the constructed abstraction. The main contributions of this paper are two-fold. To begin with, the dynamic quantizer is first applied to approximate the state and input sets. The applied dynamic quantizer is the zoom quantizer \cite{Liberzon2003hybrid}, which has a quantization parameter to adjust quantization level dynamically. Since the zoom quantizer is only applied for bounded regions, we propose a discrete-event based choosing strategy to generate new bounded region and update the quantization parameter. Hence, a dynamic approximation approach is proposed for the state and input sets, and only a sequence of bounded regions needs to be approximated. Compared with the static approximation in \cite{Zamani2012symbolic, Girard2010approximately, Pola2008approximately}, the computational complexity is reduced greatly. Second, using the proposed dynamic approximation, a dynamic abstraction is constructed for nonlinear control systems. We further establish the approximate bisimulation relation between the original system and the constructed dynamic abstraction.

\section{Control Systems and Stability Property}
\label{sec-nonconsys}

\subsection{Notations}

$\mathbb{R}:=(-\infty, +\infty)$; $\mathbb{R}^{+}_{0}:=[0, +\infty)$; $\mathbb{R}^{+}:=(0, +\infty)$; $\mathbb{N}:=\{0, 1, \ldots\}$; $\mathbb{N}^{+}:=\{1, 2, \ldots\}$. $\mathbb{R}^{n}$ denotes the $n$-dimensional Euclidean space. Given two sets $\mathcal{A}$ and $\mathcal{B}$, $\mathcal{B}\backslash\mathcal{A}:=\{x: x\in\mathcal{B}, x\notin\mathcal{A}\}$. Given $a, b\in\mathbb{R}\cup\{\pm\infty\}$ with $a\leq b$, we denote by $[a, b]$ a closed interval. Given a vector $x\in\mathbb{R}^{n}$, $x_{i}$ denotes the $i$-th element of $x$, , $|x_{i}|$ denotes the absolute value of $x_{i}$, and $\|x\|$ denotes the infinite norm of $x$. Given a matrix $A\in\mathbb{R}^{n\times n}$, we denote by $\|A\|$ the infinity norm of $A$. The closed ball centered at $x\in\mathbb{R}^{n}$ with radius $\varepsilon\in\mathbb{R}^{+}$ is defined by $\mathbf{B}(x, \varepsilon)=\{y\in\mathbb{R}^{n}: \|x-y\|\leq\varepsilon\}$.

Given a measurable function $f: \mathbb{R}^{+}_{0}\rightarrow\mathbb{R}^{n}$, the (essential) supremum (sup norm) of $f$ is denoted by $\|f\|_{\infty}$; $\|f\|_{\infty}:=\ess\sup\{\|f(t)\|: t\in\mathbb{R}^{+}_{0}\}$; $f(t^{+})=\lim_{s\searrow t}f(s)$; $f^{+}=f(t^{+})$ when the time argument is omitted. A function $\alpha: \mathbb{R}^{+}_{0}\to\mathbb{R}^{+}_{0}$ is of class $\mathcal{K}$ if it is continuous, zero at zero, and strictly increasing; $\alpha(t)$ is of class $\mathcal{K}_{\infty}$ if it is of class $\mathcal{K}$ and unbounded. A function $\beta: \mathbb{R}^{+}_{0}\times\mathbb{R}^{+}_{0}\to\mathbb{R}^{+}_{0}$ is of class $\mathcal{KL}$ if $\beta(s, t)$ is of class $\mathcal{K}$ for each fixed $t\in\mathbb{R}^{+}_{0}$ and decreases to zero as $t\rightarrow\infty$ for each fixed $s\in\mathbb{R}^{+}_{0}$. $\Id$ denotes the identity function, and $\Id_{X}$ denotes the identity function from the set $X$ to $X$. Given two sets $A, B\subset\mathbb{R}^{n}$, a relation $\mathcal{R}\subset A\times B$ is a map $\mathcal{R}: A\rightarrow2^{B}$ defined by $b\in\mathcal{R}(a)$ if and only if $(a, b)\in\mathcal{R}$. $\mathcal{R}^{-1}$ denotes the inverse relation of $\mathcal{R}$, i.e., $\mathcal{R}^{-1}:=\{(b, a)\in B\times A: (a, b)\in\mathcal{R}\}$.

\subsection{Nonlinear Control Systems}

\begin{definition}[see \cite{Pola2008approximately}]
\label{def-1}
A \textit{control system} $\Sigma$ is a quadruple $\Sigma=(\mathbb{R}^{n}, U, \mathcal{U}, f)$, where, (i) $\mathbb{R}^{n}$ is the state set; (ii) $U\subseteq\mathbb{R}^{m}$ is the input set; (iii) $\mathcal{U}$ is a subset of all piecewise continuous functions of time from the interval $(a, b)\subset\mathbb{R}$ to $U$, with $b>0>a$; (iv) $f: \mathbb{R}^{n}\times U\rightarrow\mathbb{R}^{n}$ is a continuous map satisfying the following Lipschitz assumption: there exists a constant $L\in\mathbb{R}^{+}$ such that for all $x, y\in\mathbb{R}^{n}$ and all $u\in U$, we have $\|f(x, u)-f(y, u)\|\leq L\|x-y\|$.
\end{definition}

A curve $\xi: (a, b)\rightarrow\mathbb{R}^{n}$ is said to be a \textit{trajectory} of $\Sigma$ if there exists $\mathbf{u}\in\mathcal{U}$ such that, for almost all $t\in(a, b)$,
\begin{equation}
\label{eqn-1}
\dot{\xi}(t)=f(\xi(t), \mathbf{u}(t)).
\end{equation}
Different from the trajectory defined above over the open domain, we refer to the trajectory $\mathbf{x}: [0, \tau]\rightarrow\mathbb{R}^{n}$ defined on a closed domain $[0, \tau]$ with $\tau\in\mathbb{R}^{+}$ such that $\mathbf{x}=\xi|_{[0, \tau]}$. Denote by $\mathbf{x}(t, x, \mathbf{u})$ the point reached at time $t\in(a, b)$ under the input $\mathbf{u}$ from the initial state $x$. Such a point is determined uniquely because the assumptions on $f$ ensure the existence and uniqueness of the trajectory; see \cite[Appendix C.3]{Sontag2013mathematical}. The system $\Sigma$ is said to be \emph{forward complete} if every trajectory is defined on an interval of the form $(a, +\infty)$. Sufficient and necessary conditions can be found in \cite{Angeli1999forward} for the forward completeness of a control system.

\begin{definition}
\label{def-2}
The  system $\Sigma$ is \textit{incrementally globally asymptotically stable ($\delta$-GAS)}, if there exists $\beta\in\mathcal{KL}$ such that for all $t\in\mathbb{R}^{+}_{0}$, $x_{1}, x_{2}\in\mathbb{R}^{n}$ and all $\mathbf{u}\in\mathcal{U}$,
\begin{align}
\label{eqn-2}
\|\mathbf{x}(t, x_{1}, \mathbf{u})-\mathbf{x}(t, x_{2}, \mathbf{u})\|\leq\beta(\|x_{1}-x_{2}\|, t).
\end{align}
\end{definition}

Since it is not easy to check \eqref{eqn-2} directly, we can characterize the $\delta$-GAS property via the Lyapunov function.

\begin{definition}[see \cite{Girard2010approximately}]
\label{def-3}
A smooth function $V: \mathbb{R}^{n}\times\mathbb{R}^{n}\rightarrow\mathbb{R}^{+}_{0}$ is called a \textit{$\delta$-GAS Lyapunov function} for the system $\Sigma$, if there exist $\alpha_{1}, \alpha_{2}, \rho\in\mathcal{K}_{\infty}$ such that for all $x_{1}, x_{2}\in\mathbb{R}^{n}$ and $u\in U$,
\begin{align*}
&\alpha_{1}(\|x_{1}-x_{2}\|)\leq V(x_{1}, x_{2})\leq\alpha_{2}(\|x_{1}-x_{2}\|), \\
&\dfrac{\partial V(x_{1}, x_{2})}{\partial x_{1}}f(x_{1}, u)+\dfrac{\partial V(x_{1}, x_{2})}{\partial x_{2}}f(x_{2}, u) \nonumber \\
&\quad \leq-\rho(V(x_{1}, x_{2})).
\end{align*}
\end{definition}

From \cite{Zamani2015symbolic, Girard2010approximately}, for a compact set, the system $\Sigma$ is $\delta$-GAS if and only if it admits a $\delta$-GAS Lyapunov function.

\section{Approximate Equivalence Notions}
\label{sec-approbisimu}

In this section, we recall the notion of approximate (bi)simulation relation.

\begin{definition}[see \cite{Girard2007approximation}]
\label{def-4}
A \textit{transition system} is a sextuple $T = (X, X^{0}, U, \Delta, Y, H)$, consisting of: (i) a set of states $X$; (ii) a set of initial states $X^{0}\subseteq X$; (iii) a set of inputs $U$; (iv) a transition relation $\Delta\subseteq X\times U\times X$; (v) a set of outputs $Y$; (vi) an output function $H: X\rightarrow Y$. The transition system $T$ is said to be \textit{metric} if the output set $Y$ is equipped with a metric $\mathbf{d}: Y\times Y\rightarrow\mathbb{R}^{+}_{0}$, and \textit{symbolic} if the sets $X$ and $U$ are finite or countable.
\end{definition}

The transition $(x, u, x')\in\Delta$ is denoted by $x'\in\Delta(x, u)$, which means that the system can evolve from a state $x$ to a state $x'$ under the input $u$. An input $u\in U$ belongs to \textit{the set of enabled inputs} at the state $x$, denoted by $\enab(x)$, if $\Delta(x, u)\neq\varnothing$. If $\enab(x)=\varnothing$, then $x$ is said to be \textit{blocking}, otherwise, it is said to be \textit{non-blocking}. The transition system $T$ is said to be \textit{deterministic}, if $\Delta(x, u)$ has exactly one element for all $x\in X$ and all $u\in\enab(x)$. In this case, we write $x'=\Delta(x, u)$ with a slight abuse of notation.

\begin{definition}[see \cite{Girard2007approximation}]
\label{def-5}
Let $T_{i}=(X_{i}, X^{0}_{i}, U, \Delta_{i}, Y, H_{i})$, $i=1, 2$, be two metric transition systems with the same input set $U$ and output set $Y$ equipped with the metric $\mathbf{d}$. Let  $\varepsilon\in\mathbb{R}^{+}_{0}$, a relation $\mathcal{R}\subseteq X_{1}\times X_{2}$ is said to be an \textit{$\varepsilon$-approximate simulation relation} from $T_{1}$ to $T_{2}$, if for all $(x_{1}, x_{2})\in\mathcal{R}$ and all $u\in U$: (i) $\mathbf{d}(H_{1}(x_{1}), H_{2}(x_{2}))\leq\varepsilon$; (ii) for each $x'_{1}\in\Delta_{1}(x_{1}, u)$, there exists $x'_{2}\in\Delta_{2}(x_{2}, u)$ such that $(x'_{1}, x'_{2})\in\mathcal{R}$. A relation $\mathcal{R}\subseteq X_{1}\times X_{2}$ is said to be an \textit{$\varepsilon$-approximate bisimulation relation} between $T_{1}$ and $T_{2}$, if for all $(x_{1}, x_{2})\in\mathcal{R}$ and all $u\in U$, the conditions (i)-(ii) hold, and (iii) for each $x'_{2}\in\Delta_{2}(x_{2}, u)$, there exists $x'_{1}\in\Delta_{1}(x_{1}, u)$ such that $(x'_{1}, x'_{2})\in\mathcal{R}$.
\end{definition}

Denote by $T_{1}\preceq^{\varepsilon}_{\mathcal{S}}T_{2}$ if there is an $\varepsilon$-approximate simulation relation from $T_{1}$ to $T_{2}$. Denote by $T_{1}\simeq_{\varepsilon}T_{2}$ if there is an $\varepsilon$-approximate bisimulation relation $\mathcal{S}$ between $T_{1}$ and $T_{2}$ such that $\mathcal{R}(X_{1})=X_{2}$ and $\mathcal{R}^{-1}(X_{2})=X_{1}$.

\section{Dynamic Approximation}
\label{sec-dynappro}

In this section, a dynamic quantization based approximation approach is proposed for the state and input sets. To this end, we work on the time-discretization of the system $\Sigma$ with the sampling period $\tau>0$ as a design parameter. The sampled-data system can be written as a transition system $T_{\tau}(\Sigma):=(X_{1}, X^{0}_{1}, U_{1}, \Delta_{1}, Y_{1}, H_{1})$, where,
\begin{itemize}
\item the state set is $X_{1}:=\mathbb{R}^{n}$;
\item the set of initial states is $X^{0}_{1}:=\mathbb{R}^{n}$;
\item the input set is $U_{1}=\{u\in\mathcal{U}: \mathbf{x}(t, x, u) \text{ is defined for all } x\in\mathbb{R}^{n}\}$;
\item the transition relation is given by: for $x\in X_{1}$ and $u\in U_{1}$, $x'=\Delta_{1}(x, u)$ if and only if $x'=\mathbf{x}(\tau, x, u)$;
\item the output set is $Y_{1}:=\mathbb{R}^{n}$;
\item the output map is $H_{1}=\Id_{X_{1}}$.
\end{itemize}
$T_{\tau}(\Sigma)$ is non-blocking, deterministic, and metric when $Y_{1}$ is equipped with the metric $\textbf{d}(y, y')=\|y-y'\|$ for $y, y'\in Y_{1}$.

\subsection{Zoom Quantizer}
\label{subsec-dynquan}

A static quantizer is a piecewise constant function $\mathbf{q}: \mathbb{R}^{n}\rightarrow\mathcal{Q}$, where $\mathcal{Q}$ is a finite subset of $\mathbb{R}^{n}$; see \cite{Liberzon2003hybrid}. That is, the quantizer divides $\mathbb{R}^{n}$ into a finite number of quantization regions of the form $\{z\in\mathbb{R}^{n}: \mathbf{q}(z)=\jmath\in\mathcal{Q}\}$. For the quantizer $\mathbf{q}$, assume that there exist $M>\Lambda>0$ and $\Lambda_{0}>0$ such that the following conditions are satisfied \cite{Ren2018quantized, Liberzon2003hybrid}: (i) $\|z\|\leq M\Rightarrow\|\mathbf{q}(z)-z\|\leq\Lambda$; (ii) $\|z\|>M\Rightarrow\|\mathbf{q}(z)\|>M-\Lambda$; (iii) $\|z\|\leq\Lambda_{0}\Rightarrow\mathbf{q}(z)\equiv0$.
$M$ is called the range of the quantizer, and $\Lambda$ is called the upper bound of the quantization error $\mathbf{q}(z)-z$. Item (i) implies that if the signal does not saturate, then the quantization error is bounded by $\Lambda$. Item (ii) provides an approach to detecting whether the signal saturates or not. Item (iii) implies that it is reasonable to quantize a signal as zero directly if such signal is small. The quantizers satisfying the items (i)-(iii) can be found in the existing works; see for instance \cite{Elia2001stabilization, Delchamps1990stabilizing, Coutinho2010input}.

Following the static quantizer $\mathbf{q}$, the applied dynamic quantizer is the zoom quantizer defined as follows.
\begin{align}
\label{eqn-6}
Q_{\mu}(z)&:=\mu\mathbf{q}(z/\mu) \nonumber \\
&=\left\{\begin{aligned}
&M\Lambda\mu,  && z\geq(M+0.5)\Lambda\mu; \\
&k\Lambda\mu, && (k-0.5)\Lambda\mu\leq z<(k+0.5)\Lambda\mu; \\
&-M\Lambda\mu,  && z<-(M+0.5)\Lambda\mu,
\end{aligned}\right.
\end{align}
where, $\mu\in\mathbb{R}^{+}$ is called the quantization parameter initialized at $\mu_{0}\in\mathbb{R}^{+}$; and $k\in\mathcal{M}:=\{-M, \ldots, M\}$. The quantization parameter $\mu$ is time-varying instead of constant. For instance, $\mu$ is constant in the transmission interval and updated at the transmission times for networked control systems; see \cite{Heemels2010networked}. For the zoom quantizer \eqref{eqn-6}, the quantization range is $M\Lambda\mu$ and the upper bound of the quantization error is $\Lambda\mu$. Both $M\Lambda\mu$ and $\Lambda\mu$ depend on the quantization parameter, which implies that the quantization regions of $Q_{\mu}$ are dynamic.

According to simple geometrical considerations, we can check that for all $z\in\mathbb{R}$, $\|z-Q_{\mu}(z)\|\leq\Lambda\mu$ if $\|z\|\leq(M+1)\Lambda\mu$ and $\|z-Q_{\mu}(z)\|>\Lambda\mu$ if $\|z\|>(M+1)\Lambda\mu$. That is, the quantization error is bounded in the quantization region $\{z\in\mathbb{R}: \|z\|\leq(M+1)\Lambda\mu\}$. If the system state escapes from the bounded quantization region, then the quantization error may be large enough to lead to the failure of the desired performances and specifications. In addition, even within the quantization region, if the quantization parameter $\mu$ is too large, then the quantization can be so coarse that the control strategy is not applicable or even does not exist. As a result, the larger the quantization parameter is, the larger the bounded quantization region is, the coarser the quantization is, which leads to a tradeoff between the bounded quantization regions and the approximation accuracy.

\subsection{Discrete Event based Choosing Strategy}
\label{subsec-eventupdate}

To apply the quantizer \eqref{eqn-6} with the bounded quantization region into the construction of symbolic abstractions, an approach is to choose finite bounded regions, whose sizes are not larger than those of the quantization regions, such that the state trajectory is covered by these chosen regions. That is, a new region is chosen before the system state escapes from the current chosen region. Hence, the union of these chosen regions is a cover or a subset of the state set.

Given an initial state $x_{0}\in X_{1}$, we can choose a bounded region $\mathcal{S}_{0}\subset X_{1}$ such that $x_{0}\in\mathcal{S}_{0}$, and thus $\mathcal{S}_{0}$ is set as the initial chosen region.
Next, a discrete-event based approach is proposed to choose the bounded regions starting from $\mathcal{S}_{0}$.

Given a bounded region $\mathcal{S}_{i}$, $i\in\mathbb{N}$, we introduce an auxiliary region given by $\mathcal{S}^{\omega}_{i}:=\{x\in\mathcal{S}_{i}|\mathbf{d}(x, x')\leq\omega\Rightarrow x'\in\mathcal{S}_{i}\}$, where $\omega\in(0, 1)$ is an parameter that depends on the desired precision. We can in fact choose $\omega$ as the desired precision for the sake of simplicity. $\mathcal{S}^{\omega}_{i}$ is called an $\omega$-contraction of $\mathcal{S}_{i}$. With the introduction of $\mathcal{S}^{\omega}_{i}$, the generation rule for the next region is given as follows: if and only if $x\in\mathcal{S}_{i}\backslash\mathcal{S}^{\omega}_{i}$, a new bounded region $\mathcal{S}_{i+1}$ is generated such that $x\in\mathcal{S}^{\omega}_{i+1}$. Since $x\in\mathcal{S}_{i}\backslash\mathcal{S}^{\omega}_{i}$ and $x\in\mathcal{S}^{\omega}_{i+1}$, we have that $\mathcal{S}_{i+1}\cap\mathcal{S}_{i}\neq\varnothing$. That is, the $\omega$-contraction of the new region is only required to intersect with the current one, and the reason lies in that the intersection $\mathcal{S}_{i}\cap\mathcal{S}_{i+1}$ provides an admissible region for the state trajectory in $\mathcal{S}_{i+1}$ to avoid the Zeno and chattering phenomena; see also \cite{Liberzon2003switching, Forstner2002discrete}. In addition, a discrete event $a\in\{0, 1\}$ is defined to measure whether the system state is in $\mathcal{S}_{i}\backslash\mathcal{S}^{\omega}_{i}$. That is, $a=1$ if $x\in\mathcal{S}_{i}\backslash\mathcal{S}^{\omega}_{i}$; otherwise, $a=0$.

\begin{remark}
\label{rmk-1}
A similar technique was applied in \cite{Forstner2002discrete}, where quantized control systems are modeled as a discrete-event system. In \cite{Liberzon2003switching, Forstner2002discrete}, $\mathcal{S}^{\omega}_{i}$ is the quantization region, whereas  $\mathcal{S}_{i}$ is the quantization region with additional border to measure the discrete events or to avoid chattering phenomena. Let $\omega:=(M+1)^{-1}$, then $\mathcal{S}^{\omega}_{i}:=\{z\in\mathbb{R}: \|z\|\leq M\Lambda\mu_{i}\}$ and $\mathcal{S}_{i}:=\{z\in\mathbb{R}: \|z\|\leq(M+1)\Lambda\mu_{i}\}$, which are the same as in \cite{Liberzon2003switching, Forstner2002discrete}, and can be a choosing strategy. Note that the quantization regions are not required to have any particular shapes, and thus the choosing strategy is not unique.
\hfill $\square$
\end{remark}

Since the choosing strategy of $\mathcal{S}_{i+1}$ is not restricted and just requires that $\mathcal{S}^{\omega}_{i+1}\cap\mathcal{S}_{i}\neq\varnothing$, the determination of the size of $\mathcal{S}_{i+1}$ has effects on the controller design, which in turn affects the achievement of the desired performances and specifications. In the following, for a given safety specification, we provide an update strategy to determine the size of the new region $\mathcal{S}_{i+1}$.

For many dynamic systems like embedded systems and multi-agent systems \cite{Hashimoto2017robust}, safety and reachability are two fundamental problems, which require the system to avoid the obstacle region $\mathcal{O}:=\bigcup_{l\in\mathfrak{L}}\mathcal{O}_{l}$ (where $\mathcal{O}_{l}$ is a single obstacle and $\mathfrak{L}$ is finite set) and to reach and then remain in a specific region; see \cite{Girard2013low, Girard2016safety, Girard2012controller}. Given a safety specification, there exists a safe region $\mathcal{O}_{s}$, which may consist of finite subregions. To capture the relations among the chosen regions $\mathcal{S}_{i}$, obstacle region $\mathcal{O}$, safe region $\mathcal{O}_{s}$, the following three discrete events are defined.
\begin{align*}
b&=\left\{\begin{aligned}
&0,  && \mathcal{S}_{i}\cap\mathcal{O}=\varnothing, \\
&1,  && \mathcal{S}_{i}\cap\mathcal{O}\neq\varnothing,
\end{aligned}\right. \quad
c=\left\{\begin{aligned}
&0,  && \mathcal{S}_{i}\cap\mathcal{O}_{s}=\varnothing, \\
&1,  && \mathcal{S}_{i}\cap\mathcal{O}_{s}\neq\varnothing.
\end{aligned}\right.
\end{align*}
Combining all the discrete-event variables yields an augmented variable $e:=(a, b, c)\in\mathbb{N}^{3}$. The variable $e\in\mathbb{N}^{3}$ is well-defined due to the choosing strategy, and updated with the generation of the new region.

The size of the new region $\mathcal{S}_{i+1}$ depends on $e\in\mathbb{N}^{3}$. Conversely, the update of $e$ also depends on the choice of the region $\mathcal{S}_{i+1}$. In the following, an update strategy is derived for both the size of the region $\mathcal{S}_{i+1}$ and the variable $e\in\mathbb{N}^{4}$.
\begin{itemize}
  \item If $\Delta_{1}(x, u)\in\mathcal{S}^{\omega}_{i}$, then no new region is generated and the variable $e$ is not updated, that is, $e^{+}=e$.
  \item If $\Delta_{1}(x, u)\in\mathcal{S}_{i}\backslash\mathcal{S}^{\omega}_{i}$, then $a^{+}=1$, and a new region $\mathcal{S}_{i+1}$ is to be generated such that $\Delta_{1}(x, u)\in\mathcal{S}_{i}\cap\mathcal{S}^{\omega}_{i+1}$. The size of $\mathcal{S}_{i+1}$ is decided by the strategy below.
  \begin{itemize}
    \item If $b=1$ or $c=1$, then $\mathcal{S}_{i+1}$ is contracted with the contraction of the quantization parameter, that is,
    \begin{equation}
    \label{eqn-8}
    \mu_{i+1}=\Omega_{\ini}\mu_{i},
    \end{equation}
    where $\Omega_{\ini}\in(0, 1)$ is a given constant.
    \item If $b=0$ and $c=0$, then $\mathcal{S}_{i+1}$ is expanded with the expansion of the quantization parameter, that is,
    \begin{equation}
    \label{eqn-9}
    \mu_{i+1}=\Omega_{\out}\mu_{i},
    \end{equation}
    where $\Omega_{\out}\geq1$ is a given constant.
  \end{itemize}
  Once $\mathcal{S}_{i+1}$ is determined, $e\in\mathbb{N}^{3}$ is updated based on the relations among the regions $\mathcal{S}_{i+1}$, $\mathcal{O}$ and $\mathcal{O}_{s}$.
\end{itemize}

In the above update strategy, there is no need to generate a new region and discrete events are not changed if $x\in\mathcal{S}^{\omega}_{i}$. If not, a new region is generated; $e$ and $\mu$ are updated. Note that the update of $\mu$ (i.e., $\mu_{i+1}$) is based on the current value $\mu_{i}$, and thus the size of the new region $\mathcal{S}_{i+1}$ is determined via $\mu_{i+1}$; see also Remark \ref{rmk-1}. The new region $\mathcal{S}_{i+1}$ is contracted if $b=1$ or $c=1$. $b=1$ means that the region $\mathcal{S}_{i}$ intersects with the obstacle region. Thus, the new region $\mathcal{S}_{i+1}$ is contracted to avoid the obstacle region. $c=1$ means that the region $\mathcal{S}_{i}$ intersects with the safe region, and thus the new region $\mathcal{S}_{i+1}$ is contracted to drive the system state to reach the safe region. On the other hand, the region $\mathcal{S}_{i+1}$ is expanded if $b=0$ and $c=0$. $b=c=0$ means that the region $\mathcal{S}_{i}$ is admissible for all the states and enabled inputs, and then it is reasonable to expand the new region to allow more admissible states and inputs. Such a setting is able to guarantee the satisfaction of the safety specification.

\subsection{Approximation of State and Input Sets}
\label{subsec-approximate}

According to the sequence of the regions chosen in Subsection \ref{subsec-eventupdate}, the approximation of the state and input sets is presented in this subsection. Here, we focus on the approximation of a chosen region $\mathcal{S}_{i}\subset X_{1}$, $i\in\mathbb{N}$.

For each region $\mathcal{S}_{i}$, there exists $\mu_{i}\in\mathbb{R}^{+}$. Using the zoom quantizer \eqref{eqn-6}, the region $\mathcal{S}_{i}$ is approximated by a finite sequence of embedded lattices $[\mathcal{S}_{i}]_{\mu_{i}}$, where
\begin{align*}
[\mathcal{S}_{i}]_{\mu_{i}}:=\left\{q\in\mathcal{S}_{i}: q_{l}=\frac{k_{l}\Lambda\mu_{i}}{\sqrt{n}}, k_{l}\in\mathcal{M}, l\in\{1, \ldots, n\}\right\}.
\end{align*}
Based on geometrical considerations, the region $\mathcal{S}_{i}$ can be covered by the set $\bigcup_{q\in[\mathcal{S}_{i}]_{\mu_{i}}}\mathbf{B}(q, \lambda_{i})$ with $\lambda_{i}\geq\Lambda\mu_{i}/2$. We associate a quantizer $Q_{\mu_{i}}: \mathcal{S}_{i}\rightarrow[\mathcal{S}_{i}]_{\mu_{i}}$ such that $Q_{\mu_{i}}(x)=q$ if and only if for $x=(x_{1}, \ldots, x_{n})\in\mathcal{S}_{i}$, $l\in\{1, \ldots, n\}$,
\begin{align*}
(k_{l}-0.5)\frac{\Lambda\mu_{i}}{\sqrt{n}}<x_{l}\leq(k_{l}+0.5)\frac{\Lambda\mu_{i}}{\sqrt{n}}, \quad k_{l}\in\mathcal{M}.
\end{align*}

In the following, the set $U_{1}(\mathcal{S}_{i})$ is approximated by $U_{2}(\mathcal{S}_{i}):=\bigcup_{q\in[\mathcal{S}_{i}]_{\mu_{i}}}U_{2}(q)$, where the set $U_{2}(q)$ captures all the inputs that can be applied at the state $q\in[\mathcal{S}_{i}]_{\mu_{i}}$. To show the reasonability of $U_{2}(\mathcal{S}_{i})$, we explain the definition of $U_{2}(q)$ in detail. The definition of $U_{2}(q)$ is based on the notion of reachable sets. Given any state $q\in[\mathcal{S}_{i}]_{\mu_{i}}$ (thus $q\in\mathcal{S}_{i}$), the reachable state set of $T_{\tau}(\Sigma)$ from $q$ is obtained by $\mathfrak{R}_{i}(\tau, q):=\left\{x'\in\mathcal{S}_{i}: \mathbf{x}(\tau, q, u)=x', u\in U_{1}(\mathcal{S}_{i})\right\}$. The set $\mathfrak{R}_{i}(\tau, q)$ is well defined due to the definition of $U_{1}(\mathcal{S}_{i})$.

The reachable set $\mathfrak{R}_{i}(\tau, q)$ is approximated as follows. Given any $\eta_{i}\in\mathbb{R}^{+}$, consider the following set $\mathcal{Z}_{\eta_{i}}(\tau, q):=\{y\in[\mathcal{S}_{i}]_{\eta_{i}}: \exists z\in\mathfrak{R}_{i}(\tau, q) \text{ such that }\|y-z\|\leq\Lambda\eta_{i}/2\}$. Here, the choice of $\eta_{i}$ is not related to $\mu_{i}$, and limited by the desired precision; see Section \ref{sec-symabs}. Define the function $\psi_{i}: \mathcal{Z}_{\eta_{i}}(\tau, q)\rightarrow U_{1}(\mathcal{S}_{i})$, which is defined such that for any $y\in\mathcal{Z}_{\eta_{i}}(\tau, q)$, there exists an input $u_{1}=\psi_{i}(y)\in U_{1}(\mathcal{S}_{i})$ such that $\|y-\mathbf{x}(\tau, q, u_{1})\|\leq\Lambda\eta_{i}/2$. Note that the function $\psi_{i}$ is not unique. The set $U_{2}(q)$ can be defined by $U_{2}(q):=\psi_{i}(\mathcal{Z}_{\eta_{i}}(\tau, q))$. Since the set $U_{2}(q)$ is the image through the map $\psi_{i}$ of a countable set, we have that $U_{2}(q)$ is countable, which implies that $U_{2}(\mathcal{S}_{i})$ is countable. Hence, the set $U_{2}(\mathcal{S}_{i})$ approximates the set $U_{1}(\mathcal{S}_{i})$ in the following way: given any $q\in[\mathcal{S}_{i}]_{\mu_{i}}$, for any $u_{1}\in U_{1}(\mathcal{S}_{i})$, there exists $u_{2}\in U_{2}(q)$ such that $\|\mathbf{x}(\tau, q, u_{1})-\mathbf{x}(\tau, q, u_{2})\|\leq\Lambda\eta_{i}$.


Using a similar mechanism, the approximation can be obtained for all the chosen regions. In the intersection regions, two approximations may overlap, and both are available for the proceeding construction of the symbolic abstraction.

\section{Symbolic Abstraction}
\label{sec-symabs}

Following Section \ref{sec-dynappro}, we are ready to construct a symbolic abstraction for $T_{\tau}(\Sigma)$. Since the union of all the bounded regions chosen in Subsection \ref{subsec-eventupdate} is a cover of the state trajectory, we only focus on the union of these chosen bounded regions. Define the region $\mathcal{S}:=\bigcup_{i\in\mathbb{N}^{+}}\mathcal{S}_{i}\subset\mathbb{R}^{n}$. That is, the state set is constrained in a subset of $\mathbb{R}^{n}$, which is the case in physical systems. As a result, the system $T_{\tau}(\Sigma)$ can be rewritten as $\bar{T}_{\tau}(\Sigma):=(\bar{X}_{1}, X^{0}_{1}, \bar{U}_{1}, \Delta_{1}, Y_{1}, H_{1})$ with $\bar{X}_{1}=\mathcal{S}$ and $\bar{U}_{1}=U_{1}(\mathcal{S})$. The region $\mathcal{S}$ is not fixed, but varies with the choosing strategy in Subsection \ref{subsec-eventupdate}.

Define the vectors $\bar{\mu}:=(\mu_{0}, \mu_{1}, \mu_{2}, \ldots)$ and $\bar{\eta}:=(\eta_{0}, \eta_{1}, \eta_{2}, \ldots)$. The symbolic abstraction of $\bar{T}_{\tau}(\Sigma)$ is described by the transition system
\begin{equation}
\label{eqn-11}
T_{\tau, \bar{\mu}, \bar{\eta}}(\Sigma)=(X_{2}, X^{0}_{2}, U_{2}, \Delta_{2}, Y_{2}, H_{2}),
\end{equation}
where,
\begin{itemize}
\item the set of states is $X_{2}=\bigcup_{i\in\mathbb{N}^{+}}[\mathcal{S}_{i}]_{\mu_{i}}$;
\item the set of initial states is $X^{0}_{2}=\mathbb{R}^{n}$;
\item the set of inputs is $U_{2}=\bigcup_{i\in\mathbb{N}^{+}}[U_{1}(\mathcal{S}_{i})]_{\eta_{i}}$;
\item the transition relation is given by: for $q\in X_{2}$ and $u\in U_{2}$, $q'=\Delta_{2}(q, u)$ if and only if $q'=Q_{\mu_{i}}(\mathbf{x}(\tau, q, u))$;
\item the set of outputs is $Y_{2}=\mathbb{R}^{n}$;
\item the output map is $H_{2}=\Id_{X_{2}}$.
\end{itemize}
$T_{\tau, \bar{\mu}, \bar{\eta}}(\Sigma)$ is non-blocking, deterministic. It is metric if $Y_{2}$ is equipped with the metric $\mathbf{d}(y', y)=\|y-y'\|$ for $y, y'\in Y_{2}$.

\begin{theorem}
\label{thm-1}
Consider the system $\Sigma$ and given a desired precision $\varepsilon\in\mathbb{R}^{+}$, if $\Sigma$ is $\delta$-GAS, and there exist $\tau, \eta_{i}, \mu_{i}\in\mathbb{R}^{+}$, $i\in\mathbb{N}$, such that
\begin{align}
\label{eqn-12}
\beta(\varepsilon, \tau)+\Lambda\eta_{i}+0.5\Lambda\mu_{i}\leq\varepsilon,
\end{align}
then $\bar{T}_{\tau}(\Sigma)\simeq_{\varepsilon}T_{\tau, \bar{\mu}, \bar{\eta}}(\Sigma)$.
 \end{theorem}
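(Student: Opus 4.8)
The plan is to exhibit an explicit $\varepsilon$-approximate bisimulation relation and verify the three conditions of Definition~\ref{def-5} together with the coverage requirements defining $\simeq_{\varepsilon}$. I would take as candidate
\[
\mathcal{R}:=\{(x, q)\in\bar{X}_{1}\times X_{2}: \|x-q\|\leq\varepsilon\}.
\]
Condition~(i) is then immediate: since $H_{1}=\Id$ and $H_{2}=\Id$, for any $(x,q)\in\mathcal{R}$ we have $\mathbf{d}(H_{1}(x), H_{2}(q))=\|x-q\|\leq\varepsilon$. The heart of the argument is the transition-matching condition~(ii). I would fix $(x,q)\in\mathcal{R}$ with $q\in[\mathcal{S}_{i}]_{\mu_{i}}$ and an input $u_{1}\in\bar{U}_{1}$, set $x'=\mathbf{x}(\tau, x, u_{1})$, and use the input-approximation property of $U_{2}(q)$ to select $u_{2}\in U_{2}(q)$ with $\|\mathbf{x}(\tau, q, u_{1})-\mathbf{x}(\tau, q, u_{2})\|\leq\Lambda\eta_{i}$, letting $q'=\Delta_{2}(q, u_{2})=Q_{\mu_{i}}(\mathbf{x}(\tau, q, u_{2}))$.

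I would then split $\|x'-q'\|$ by the triangle inequality into three pieces and bound each separately:
\begin{align*}
\|x'-q'\| &\leq \|\mathbf{x}(\tau, x, u_{1})-\mathbf{x}(\tau, q, u_{1})\| \\
&\quad + \|\mathbf{x}(\tau, q, u_{1})-\mathbf{x}(\tau, q, u_{2})\| \\
&\quad + \|\mathbf{x}(\tau, q, u_{2})-Q_{\mu_{i}}(\mathbf{x}(\tau, q, u_{2}))\|.
\end{align*}
The first term is bounded by $\beta(\|x-q\|,\tau)\leq\beta(\varepsilon,\tau)$ from the $\delta$-GAS estimate \eqref{eqn-2}, using that $\beta$ is of class $\mathcal{K}$ in its first argument and $\|x-q\|\leq\varepsilon$; the second by $\Lambda\eta_{i}$ by construction of $U_{2}(q)$; and the third by the quantization-error bound $0.5\Lambda\mu_{i}$ of $Q_{\mu_{i}}$. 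Summing and invoking hypothesis \eqref{eqn-12} yields $\|x'-q'\|\leq\beta(\varepsilon,\tau)+\Lambda\eta_{i}+0.5\Lambda\mu_{i}\leq\varepsilon$, so $(x',q')\in\mathcal{R}$.

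Condition~(iii) is the easier direction: given $(x,q)\in\mathcal{R}$ and $u_{2}\in U_{2}\subseteq\bar{U}_{1}$, I would feed the \emph{same} input $u_{2}$ into $\bar{T}_{\tau}(\Sigma)$, set $x'=\mathbf{x}(\tau, x, u_{2})$ and $q'=Q_{\mu_{i}}(\mathbf{x}(\tau, q, u_{2}))$; now only the $\delta$-GAS term and the quantization term survive, giving $\|x'-q'\|\leq\beta(\varepsilon,\tau)+0.5\Lambda\mu_{i}\leq\varepsilon$. To upgrade the resulting relation to $\simeq_{\varepsilon}$, I would verify the coverage identities $\mathcal{R}(\bar{X}_{1})=X_{2}$ and $\mathcal{R}^{-1}(X_{2})=\bar{X}_{1}$: every $q\in X_{2}\subseteq\mathcal{S}$ is trivially related to itself, while every $x\in\mathcal{S}=\bar{X}_{1}$ lies in some $\mathcal{S}_{i}$ and admits a lattice point $q=Q_{\mu_{i}}(x)\in[\mathcal{S}_{i}]_{\mu_{i}}$ with $\|x-q\|\leq0.5\Lambda\mu_{i}\leq\varepsilon$, again by \eqref{eqn-12}.

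The main obstacle I anticipate is not the triangle-inequality bookkeeping but the correct handling of the \emph{dynamic} indexing inherent to the construction. Because the region index $i$, the quantization parameter $\mu_{i}$ and the precision $\eta_{i}$ all vary along the trajectory, I must make sure that (a) the input $u_{2}$ produced by the reachable-set-based approximation genuinely lies in $U_{2}$ and that $U_{2}\subseteq\bar{U}_{1}$, so that condition~(iii) can reuse it, and (b) the successor $\mathbf{x}(\tau, q, u_{2})$ lands in the region $\mathcal{S}_{i}$ on which $Q_{\mu_{i}}$ is defined and for which the $0.5\Lambda\mu_{i}$ bound is valid — this is where the reachability definition $\mathfrak{R}_{i}(\tau,q)\subseteq\mathcal{S}_{i}$ and the overlap property $\mathcal{S}_{i}\cap\mathcal{S}_{i+1}\neq\varnothing$ of the choosing strategy become essential. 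The saving grace is that \eqref{eqn-12} is assumed to hold \emph{uniformly} in $i$, with the $\delta$-GAS contribution $\beta(\varepsilon,\tau)$ independent of the region; hence once each single step is shown to stay within $\varepsilon$, the bound propagates along the whole sequence of chosen regions without accumulation.
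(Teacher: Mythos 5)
Your proposal is correct and follows essentially the same route as the paper: the same relation $\mathcal{R}=\{(x,q):\|x-q\|\leq\varepsilon\}$, the same three-term triangle-inequality decomposition into the $\delta$-GAS term $\beta(\varepsilon,\tau)$, the input-approximation term $\Lambda\eta_{i}$, and the quantization term $0.5\Lambda\mu_{i}$, with condition (iii) handled by reusing $u_{2}$ as a concrete input. The only cosmetic difference is that you invoke the input-approximation property of $U_{2}(q)$ as a black box, whereas the paper's proof re-derives it inline via the intermediate lattice point $v\in\mathcal{Z}_{\eta_{i}}(\tau,q)$ and the map $\psi_{i}$; your closing remark about the uniformity of \eqref{eqn-12} in $i$ matches the paper's treatment of the region-switching case $x'\in\mathcal{S}_{i}\backslash\mathcal{S}^{\omega}_{i}$.
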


\begin{proof}
Define the relation $\mathcal{R}:=\{(x, q)\in\bar{X}_{1}\times X_{2}: \|x-q\|\leq\varepsilon\}$. We will show that $\mathcal{R}$ is an $\varepsilon$-approximate bisimulation relation between $\bar{T}_{\tau}(\Sigma)$ and $T_{\tau, \bar{\mu}, \bar{\eta}}(\Sigma)$.

From the relation $\mathcal{R}$, we have that $\mathcal{R}(\bar{X}_{1})=X_{2}$. In addition, $\bar{X}_{1}\subseteq\bigcup_{q\in X_{2}}\mathbf{B}(q, \Lambda\mu_{i}/2)$, which implies that for any $x\in\bar{X}_{1}$, there exists a $q\in X_{2}$ such that $\|x-q\|\leq\Lambda\mu_{i}/2$. From the relation $\mathcal{R}$,
$\Lambda\mu_{i}/2\leq\varepsilon$ holds for all $i\in\mathbb{N}^{+}$. Next, we only consider the region $\mathcal{S}_{i}$, where $i\in\mathbb{N}^{+}$.

Consider any $(x, q)\in\mathcal{R}$ with $x\in\mathcal{S}_{i}$ and given any $u_{1}\in U_{1}(\mathcal{S}_{i})$ such that $x'=\Delta_{1}(x, u_{1})\in\mathcal{S}^{\omega}_{i}$. Let  $\bar{x}=\mathbf{x}(\tau, q, u_{1})\in\mathcal{S}_{i}$. Since $\mathcal{S}_{i}\subseteq\bigcup_{q\in [\mathcal{S}_{i}]_{\eta_{i}}}\mathbf{B}(q, \Lambda\eta_{i}/2)$, there exists $v\in[\mathcal{S}_{i}]_{\eta_{i}}$ such that $\|\bar{x}-v\|=\|\mathbf{x}(\tau, q, u_{1})-v\|\leq\Lambda\eta_{i}/2$. Since $\bar{x}\in\mathfrak{R}_{i}(\tau, q)$, $v\in\mathcal{Z}_{\eta_{i}}(\tau, q)$ from the definition of $\mathcal{Z}_{\eta_{i}}(\tau, q)$. Given any $u_{2}\in U_{2}(\mathcal{S}_{i})$ with $u_{2}=\psi_{i}(v)$, there exists $w=\mathbf{x}(\tau, q, u_{2})\in\mathcal{S}_{i}$ such that $\|w-v\|=\|\mathbf{x}(\tau, q, u_{2})-v\|\leq\Lambda\eta_{i}/2$. In addition, since $\bar{X}_{1}\subseteq\bigcup_{q\in X_{2}}\mathbf{B}(q, \Lambda\mu_{i}/2)$, there exists $q'\in X_{2}$ such that $\|q'-w\|=\|q'-\mathbf{x}(\tau, q, u_{2})\|\leq\Lambda\mu_{i}/2$. Choose $q'=\Delta_{2}(q, u_{2})$. Since the system $\Sigma$ is $\delta$-GAS, we have
\begin{align*}
&\|x'-q'\|\leq\|\Delta_{1}(x, u_{1})-\bar{x}\|+\|\bar{x}-\Delta_{2}(q, u_{2})\| \\
&\leq\beta(\|x-q\|, \tau)+\|\bar{x}-v+v-w+w-\Delta_{2}(q, u_{2})\| \\
&\leq\beta(\varepsilon, \tau)+\Lambda\eta_{i}+\Lambda\mu_{i}/2.
\end{align*}
Therefore, we conclude from \eqref{eqn-12} that $(x', q')\in\mathcal{R}$. If $x'=\Delta_{1}(x, u_{1})\in\mathcal{S}_{i}\backslash\mathcal{S}^{\omega}_{i}$, then it follows from the choosing strategy in Section \ref{sec-dynappro} that $x'\in\mathcal{S}_{i+1}$. In this case, since \eqref{eqn-12} holds for all $i\in\mathbb{N}$, we can proceed the analysis along the similar fashion, and obtain that $(x', q')\in\mathcal{R}$.

Consider any $(x, q)\in\mathcal{R}$ with $x\in\mathcal{S}_{i}$. Choose any $u_{2}\in U_{2}(\mathcal{S}_{i})$ such that $q'=\Delta_{2}(q, u_{2})\in\mathcal{S}_{i}$. From the construction of $T_{\tau, \bar{\mu}, \bar{\eta}}(\Sigma)$, we have that $q'=Q_{\mu_{i}}(\mathbf{x}(\tau, q, u))$, which implies that $\|q'-\mathbf{x}(\tau, q, u_{2})\|\leq\Lambda\mu_{i}/2$. Since $\mathbf{x}(\tau, q, u_{2})=\Delta_{1}(q, u_{2})$, we have that $\mathbf{x}(\tau, q, u_{2})\in\bar{X}_{1}$. Pick $u_{1}=u_{2}$, and let $x'=\Delta_{1}(x, u_{1})\in\mathcal{S}_{i}$. It follows from $\delta$-GAS of $\Sigma$ that
\begin{align*}
\|x'-q'\|&\leq\|x'-\mathbf{x}(\tau, q, u_{2})\|+\|\mathbf{x}(\tau, q, u_{2})-q'\| \\
&\leq\beta(\|x-q\|, \tau)+\|\mathbf{x}(\tau, q, u_{2})-q'\| \\
&\leq\beta(\varepsilon, \tau)+\Lambda\mu_{i}/2,
\end{align*}
which implies from \eqref{eqn-12} that $(x', q')\in\mathcal{R}$. Similarly, if $q'\in\mathcal{S}_{i+1}$, then the analysis can be continued in a similar fashion to obtain that $(x', q')\in\mathcal{R}$.
\end{proof}

Since the quantization parameter $\mu$ evolves with the rules given in \eqref{eqn-8}-\eqref{eqn-9}, the value of $\mu$ in each chosen region is only related to the initial $\mu_{0}$ and the coefficients $\Omega_{\ini}$ and $\Omega_{\out}$. In addition, the update of $\mu$ corresponds to the choice of the new region. Set $\mathcal{S}_{0}$ as the first chosen region, then $\mathcal{S}_{i}$ is the $(i+1)$-th chosen region with $i\in\mathbb{N}$. In these $i+1$ times of choosing regions, the number of the updates of $\mu$ via \eqref{eqn-8}-\eqref{eqn-9} is $i$. Denote by $p\in\mathbb{N}$ the number of the updates of $\mu$ via \eqref{eqn-8}, and thus the number of the updates of $\mu$ via \eqref{eqn-9} is $i-p\in\mathbb{N}$.

\begin{theorem}
\label{thm-2}
Consider the system $\Sigma$ and given any desired precision $\varepsilon\in\mathbb{R}^{+}$. If $\Sigma$ is $\delta$-GAS, and there exist $\tau, \eta_{0}, \mu_{0}\in\mathbb{R}^{+}$, $\Omega_{\ini}\in(0, 1)$ and $\Omega_{\out}>1$ such that for all $i\in\mathbb{N}$,
\begin{align}
\label{eqn-18}
\beta(\varepsilon, \tau)+\Omega^{p}_{\ini}\Omega^{i-p}_{\out}(\Lambda\eta_{0}+\Lambda\mu_{0}/2)\leq\varepsilon, \quad p\leq i,
\end{align}
then $\bar{T}_{\tau}(\Sigma)\simeq_{\varepsilon}T_{\tau, \bar{\mu}, \bar{\eta}}(\Sigma)$.
\end{theorem}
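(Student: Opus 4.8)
The plan is to derive Theorem~\ref{thm-2} as a concrete instance of Theorem~\ref{thm-1}, obtained by solving the recursion that the update rules \eqref{eqn-8}--\eqref{eqn-9} impose on the quantization parameters. The condition \eqref{eqn-12} of Theorem~\ref{thm-1} is already sufficient for the bisimulation; what remains is to express the region-dependent quantities $\mu_i$ and $\eta_i$ in terms of their initial values and then to check that \eqref{eqn-18} is exactly \eqref{eqn-12} rewritten in these terms.

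First I would track the evolution of $\mu$ along the sequence of chosen regions. Starting from $\mathcal{S}_0$ with parameter $\mu_0$, each passage to a new region multiplies the current parameter either by $\Omega_{\ini}$ (a contraction via \eqref{eqn-8}) or by $\Omega_{\out}$ (an expansion via \eqref{eqn-9}). After $i$ such updates, of which $p$ are contractions and $i-p$ are expansions, commutativity of scalar multiplication gives the closed form
\[
\mu_i = \Omega_{\ini}^{p}\,\Omega_{\out}^{\,i-p}\,\mu_0.
\]
Since in this setting $\eta$ is driven by the same discrete-event mechanism, scaling together with $\mu$ at each region change, identical bookkeeping yields $\eta_i = \Omega_{\ini}^{p}\,\Omega_{\out}^{\,i-p}\,\eta_0$. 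Substituting both expressions into the middle term of \eqref{eqn-12} collapses it to $\Omega_{\ini}^{p}\,\Omega_{\out}^{\,i-p}(\Lambda\eta_0 + 0.5\Lambda\mu_0)$, so \eqref{eqn-12} holds for every $i$ precisely when \eqref{eqn-18} does; invoking Theorem~\ref{thm-1} then delivers $\bar{T}_{\tau}(\Sigma)\simeq_{\varepsilon}T_{\tau,\bar{\mu},\bar{\eta}}(\Sigma)$.

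The algebra of the substitution is routine, so I expect the only delicate point to be the counting argument behind the factor $\Omega_{\ini}^{p}\Omega_{\out}^{\,i-p}$. One must justify that this factor depends only on the split count $p$ and not on the order in which contractions and expansions are interleaved (immediate from commutativity), and, more importantly, that a run of the choosing strategy can realize any split with $p\in\{0,1,\ldots,i\}$. Because different runs produce different values of $p$, the inequality \eqref{eqn-18} must be imposed uniformly over all $p\le i$ so as to guarantee the bisimulation regardless of which sequence of discrete events is triggered; this uniform-over-$p$ quantification is the substantive content of the hypothesis, and verifying that it indeed reproduces \eqref{eqn-12} for every admissible region index is the step I would scrutinize most carefully.
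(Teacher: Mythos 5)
Your proposal is correct and matches the paper's own treatment: the paper proves Theorem~\ref{thm-2} precisely by setting $\mu_{i}:=\Omega^{p}_{\ini}\Omega^{i-p}_{\out}\mu_{0}$ and $\eta_{i}:=\Omega^{p}_{\ini}\Omega^{i-p}_{\out}\eta_{0}$, under which \eqref{eqn-18} becomes \eqref{eqn-12} and the result follows from Theorem~\ref{thm-1}. Your additional remark that the bound must hold uniformly over all splits $p\leq i$ (since the discrete-event run determines $p$) is exactly the right reading of the hypothesis.
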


Let $\mu_{i}:=\Omega^{p}_{\ini}\Omega^{i-p}_{\out}\mu_{0}$ and $\eta_{i}:=\Omega^{p}_{\ini}\Omega^{i-p}_{\out}\eta_{0}$ for $i\in\mathbb{N}^{+}$, then Theorem \ref{thm-2} is equivalent to Theorem \ref{thm-1}. Condition \eqref{eqn-18} provides an alternative criterion for the choice of the bounded regions: the update of $\mu$ is required to satisfy \eqref{eqn-18}.

According to the discussion in \cite[Section 4]{Saoud2017optimal}, given a compact set $C\subset\mathbb{R}^{n}$ with nonempty interior, the number of symbolic states in $X_{2}\cap C$ is $\theta_{C}\eta^{-n}$, where $\theta_{C}\in\mathbb{R}^{+}$ is a positive constant proportional to the volume of $C$, and $\eta$ is the state space sampling parameter. Comparing with \cite{Girard2013low, Girard2016safety, Pola2008approximately}, the number of symbolic states obtained via the proposed approach in this paper is smaller due to the following reasons: (i) the state space sampling parameter is a constant in \cite{Girard2012controller, Pola2008approximately} due to the uniform quantization, whereas varies with the quantization parameter here; (ii) the chosen region is a subset of the state set, which leads to a smaller $\theta_{C}$ due to its proportionality to the volume of $C$. Hence, the number of symbolic states is $\sum_{i\in\mathbb{N}}\theta_{i}(\Lambda\mu_{i})^{-n}$ with the volume-related coefficient $\theta_{i}\in\mathbb{R}^{+}$ for the region $\mathcal{S}_{i}$.

\section{Illustrative Example}
\label{sec-example}

Consider an autonomous vehicle, whose dynamics is assumed to be the bicycle model; see \cite[Chapter 2.4]{Astrom2010feedback}. The dynamics is of the form \eqref{eqn-1} with $f: \mathbb{R}^{3}\times U\rightarrow\mathbb{R}^{3}$ given by
\begin{equation}
\label{eqn-19}
f (x, u)=\begin{bmatrix}
u_{1}\cos(\alpha+x_{3})\cos^{-1}(\alpha) \\
u_{1}\sin(\alpha+x_{3})\cos^{-1}(\alpha) \\
u_{1}\tan(u_{2})
\end{bmatrix},
\end{equation}
where $x:=(x_{1}, x_{2}, x_{3})\in\mathbb{R}^{3}$ is the system state with the position $(x_{1}, x_{2})$ and the orientation $x_{3}$, $u=(u_{1}, u_{2})$ is the control input with the rear wheel velocity $u_{1}$ and the steering angle $u_{2}$, and $\alpha:=\arctan(\tan(u_{2})/2)$. In addition, the state set is limited in the set $X=[0, 10]\times[0, 10]\times[-\pi, \pi]$, and the control input is given by the set $U:=[-1, 1]\times[-1, 1]$.

The control problem to be studied is formulated with respect to the sampled system $T_{\tau}(\Sigma)$ associated with \eqref{eqn-1} and the sampling time $\tau=0.3$. Here, our objective is to design a controller such that certain patrolling behavior is enforced on the vehicle in a given complex environment. The vehicle is initialized at the state $A_{1,0}=\{(0.4, 0.4, 0)\}$, needs to avoid the obstacles $A_{1, a}$ (that is, the black regions in Fig. \ref{fig-2}) and to patrol infinitely often between two target regions $A_{1, r_{1}}=[0, 0.5]\times[0, 0.5]\times\mathbb{R}$ and $A_{1, r_{2}}=(9, 0, 0)+A_{1, r_{1}}$. For the obstacle regions, the third component of $A_{1, a}$ equals to $\mathbb{R}$. Therefore, the specification $\mathcal{O}^{1}_{s}$ is defined as: $\{(u, x)\in(U_{1}\times X_{1})^{\mathbb{N}^{+}}: x(0)\in A_{1, 0}\Rightarrow\forall_{t\in\mathbb{N}^{+}}(x(t)\notin A_{1, a}\wedge\forall_{i\in\{1, 2\}}\exists_{t'\in[t, \infty)}x(t')\in A_{1, r_{i}})\}$, where $U_{1}=U$ and $X_{1}=\mathbb{R}^{3}$. Hence, the control problem is to design a control strategy for $T_{\tau}(\Sigma)$ such that the specification $\mathcal{O}^{1}_{s}$ is satisfied.

To deal with such a control problem, we aim to solve its abstract version based on the developed results in the previous sections. To this end, we first construct the symbolic abstraction for $T_{\tau}(\Sigma)$. According to the initial state $A_{1,0}$, we choose the initial region $\mathcal{S}_{0}:=[0, 0.6]\times[0, 0.6]\times[-4\pi/35, 4\pi/35]$ and the constant $\omega=0.1$. For the applied dynamic quantizer, let $(\Lambda_{1}, \Lambda_{2}, \Lambda_{3})=(0.2, 0.2, 2\pi/35)$, $\mu=1$ and $\Omega_{\ini}=\Omega_{\out}=1$. That is, all the chosen regions are the same and quantized uniformly. The quantized states are given by $\imath(\Lambda_{1}, \Lambda_{2})\times\jmath\Lambda_{3}$ with $\imath\in\mathbb{N}$ and $\jmath\in\pm\mathbb{N}$. There are 80 abstract states in $\mathcal{S}_{0}$. Given a precision $\varepsilon=0.2$, we choose $\eta_{0}=0.2$, and then construct a local symbolic abstraction $T_{\tau, \mu, \eta_{0}}(\Sigma)$ for the initial region $\mathcal{S}_{0}$. Along the same mechanism and the chosen region, the local symbolic abstraction can be constructed for all the chosen regions.

\begin{figure}
\centering
\subfigure
{
\label{fig-2-a}
\begin{picture}(60, 92)
\put(-55,-7){\resizebox{60mm}{33mm}{\includegraphics[width=2in]{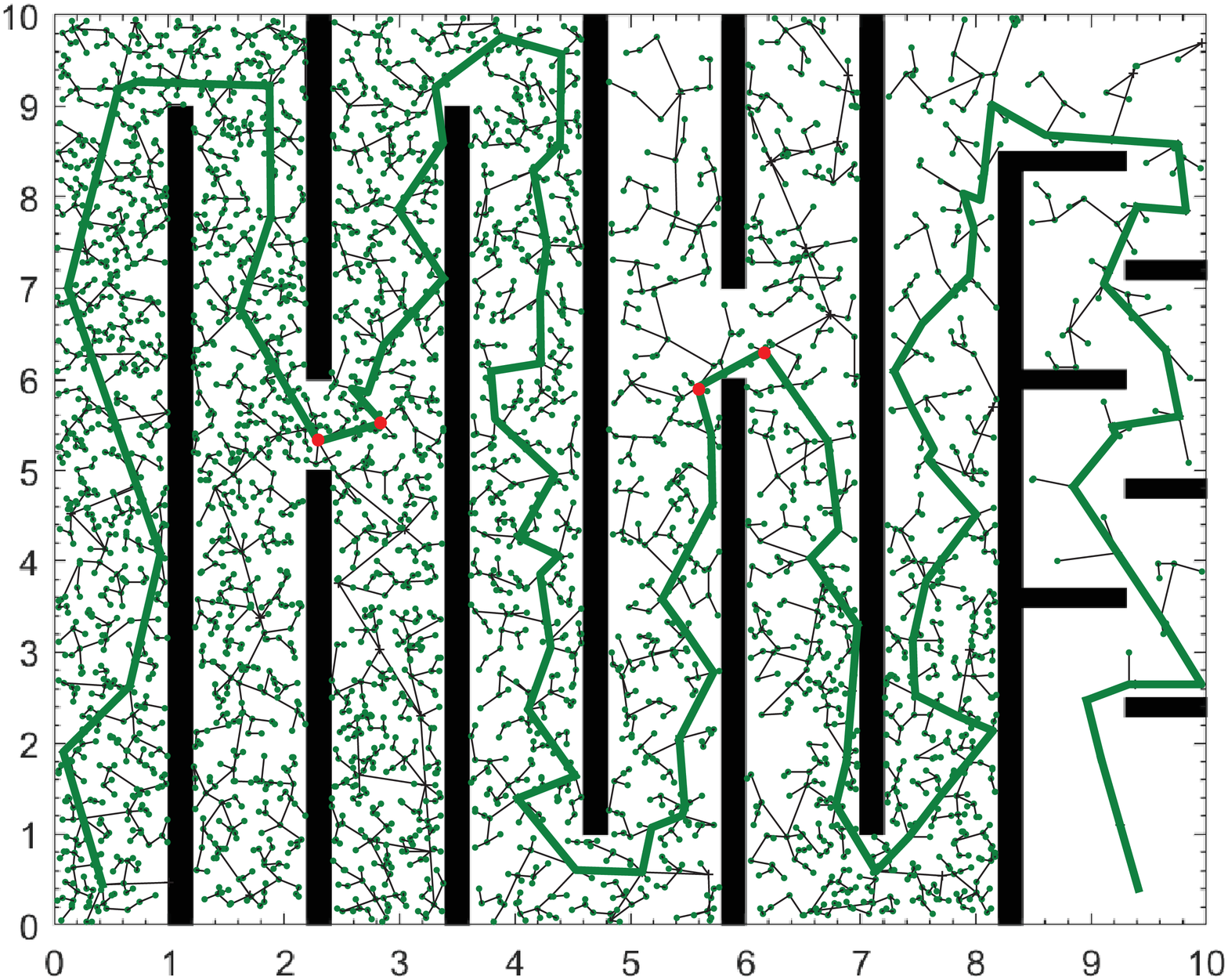}}}
\end{picture}
}

\subfigure
{
\label{fig-2-b}
\begin{picture}(60, 92)
\put(-55,-7){\resizebox{60mm}{33mm}{\includegraphics[width=2in]{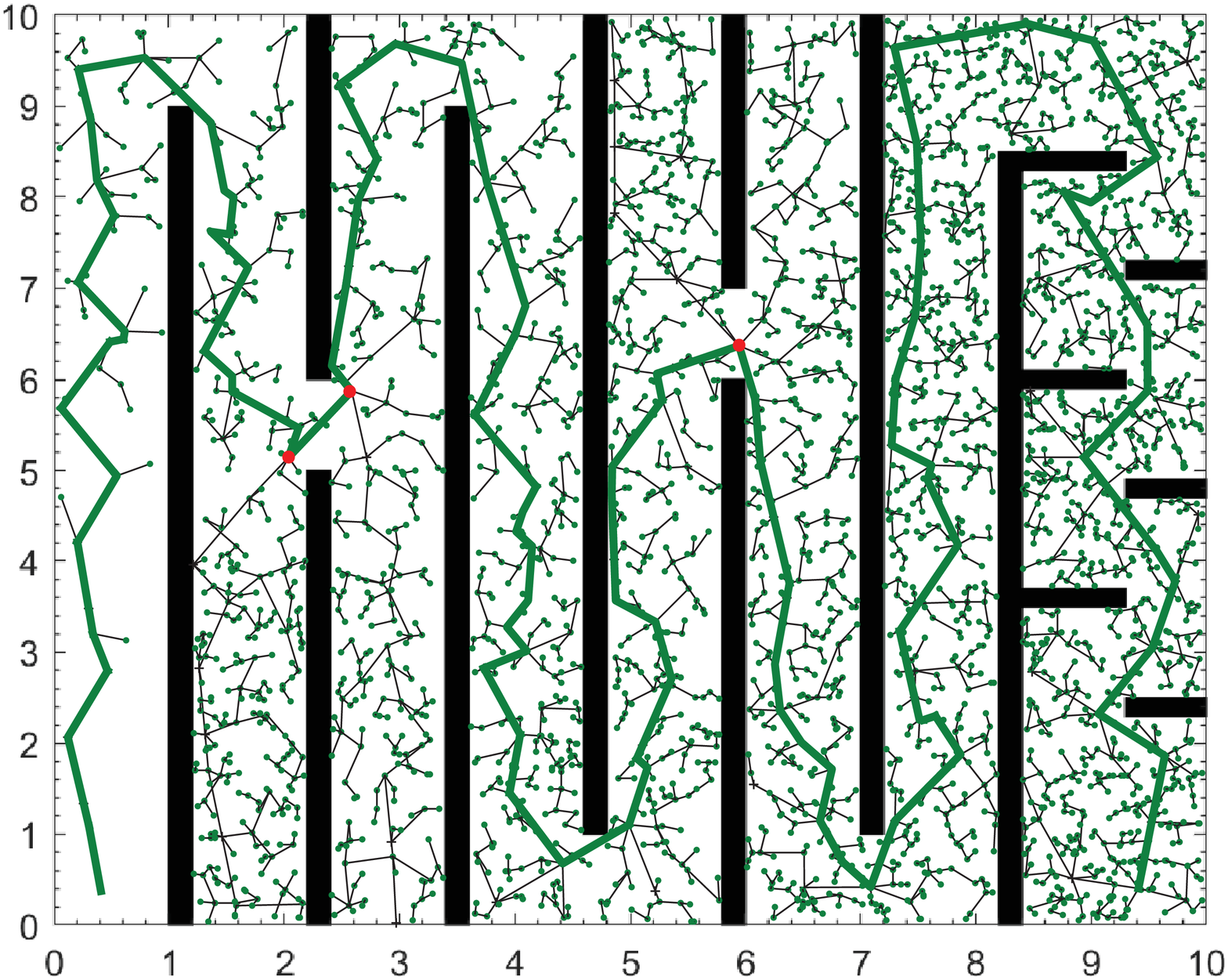}}}
\end{picture}
}
\caption{Projection of the states of $\Sigma$ and $T_{\tau, \bar{\mu}, \bar{\eta}}(\Sigma)$ to $\mathbb{R}^{2}\times\{0\}$. The black regions are the obstacles, and the green lines are the optimal path planning strategies, which is searched via the RRT algorithm. The upper figure is for the path planning from $A_{1, r_{1}}$ to $A_{1, r_{2}}$, and the lower figure is for the path planning from $A_{1, r_{2}}$ to $A_{1, r_{1}}$.}
\label{fig-2}
\end{figure}

Using the RRT algorithm in \cite{Karaman2011sampling}, the path planning of the autonomous vehicle is presented in Fig. \ref{fig-2} and the number of the chosen regions is 103. As a result, there are 8240 abstract states in all the chosen regions, whereas there are 91035 abstract states in \cite{Zamani2012symbolic}, which implies that the computation time for the abstraction are reduced greatly. Note that the designed controller is optimal in each chosen region; see Fig. \ref{fig-2}. In the chosen regions, all the possible paths are searched. A possible case is that the designed controller is not optimal after finite chosen regions. In this case, we can choose the suboptimal path to continue the searching process until the vehicle reaches into the target regions. We need to emphasize that, different path planning strategies have great effects on the satisfaction of the desired specification. For instance, if another path planning strategy is chosen at certain abstract states (e.g., the red dots in Fig. \ref{fig-2}), then the longer computation time is needed.

\section{Conclusion}
\label{sec-conclusion}

In this paper, we proposed a dynamic abstraction for nonlinear control systems via dynamic quantization. To reduce computational complexity in the construction of symbolic abstractions, a zoom quantizer with bounded quantization regions was applied. Using the zoom quantizer, a dynamic approximation approach was proposed for the state and input sets. Based on this dynamic approximation, the dynamic symbolic abstractions were constructed. Finally, a numerical example was presented to illustrate the derived results.



\begin{thebibliography}{10}

\bibitem{Milner1989communication}
R.~Milner, \emph{Communication and Concurrency}.\hskip 1em plus 0.5em minus
  0.4em\relax Prentice Hall, 1989.

\bibitem{Tabuada2006linear}
P.~Tabuada and G.~J. Pappas, ``Linear time logic control of discrete-time
  linear systems,'' \emph{IEEE Trans. Autom. Control}, vol.~51, no.~12, 2006.

\bibitem{Ramadge1987supervisory}
P.~J. Ramadge and W.~M. Wonham, ``Supervisory control of a class of discrete
  event processes,'' \emph{SIAM J. Control Optim.}, vol.~25, no.~1, pp.
  206--230, 1987.

\bibitem{Ramadge1987modular}
P.~J. Ramadge and W.~M. Wonham, ``Modular feedback logic for discrete event systems,'' \emph{SIAM J.
  Control Optim.}, vol.~25, no.~5, pp. 1202--1218, 1987.

\bibitem{Girard2012controller}
A.~Girard, ``Controller synthesis for safety and reachability via approximate
  bisimulation,'' \emph{Automatica}, vol.~48, no.~5, pp. 947--953, 2012.

\bibitem{Pola2008approximately}
G.~Pola, A.~Girard, and P.~Tabuada, ``Approximately bisimilar symbolic models
  for nonlinear control systems,'' \emph{Automatica}, vol.~44, no.~10, pp.
  2508--2516, 2008.

\bibitem{Girard2010approximately}
A.~Girard, G.~Pola, and P.~Tabuada, ``Approximately bisimilar symbolic models
  for incrementally stable switched systems,'' \emph{IEEE Trans. Autom.
  Control}, vol.~55, no.~1, pp. 116--126, 2010.

\bibitem{Girard2007approximation}
A.~Girard and G.~J. Pappas, ``Approximation metrics for discrete and continuous
  systems,'' \emph{IEEE Trans. Autom. Control}, vol.~5, no.~52, pp. 782--798,
  2007.

\bibitem{Ren2018quantized}
W.~Ren and J.~Xiong, ``Quantized feedback stabilization of nonlinear systems
  with external disturbance,'' \emph{IEEE Trans. Autom. Control}, vol.~63,
  no.~9, pp. 3167--3172, 2018.

\bibitem{Delchamps1990stabilizing}
D.~F. Delchamps, ``Stabilizing a linear system with quantized state feedback,''
  \emph{IEEE Trans. Autom. Control}, vol.~35, no.~8, pp. 916--924, 1990.

\bibitem{Liberzon2003hybrid}
D.~Liberzon, ``Hybrid feedback stabilization of systems with quantized
  signals,'' \emph{Automatica}, vol.~39, no.~9, pp. 1543--1554, 2003.

\bibitem{Zamani2012symbolic}
M.~Zamani, G.~Pola, M.~Mazo, and P.~Tabuada, ``Symbolic models for nonlinear
  control systems without stability assumptions,'' \emph{IEEE Trans. Autom.
  Control}, vol.~57, no.~7, pp. 1804--1809, 2012.

\bibitem{Sontag2013mathematical}
E.~D. Sontag, \emph{Mathematical Control Theory: Deterministic Finite
  Dimensional Systems}.\hskip 1em plus 0.5em minus 0.4em\relax Springer Science
  \& Business Media, 1998.

\bibitem{Angeli1999forward}
D.~Angeli and E.~D. Sontag, ``Forward completeness, unboundedness
  observability, and their {L}yapunov characterizations,'' \emph{Systems \&
  Control Letters}, vol.~38, no. 4-5, pp. 209--217, 1999.

\bibitem{Zamani2015symbolic}
M.~Zamani, A.~Abate, and A.~Girard, ``Symbolic models for stochastic switched
  systems: A discretization and a discretization-free approach,''
  \emph{Automatica}, vol.~55, pp. 183--196, 2015.

\bibitem{Elia2001stabilization}
N.~Elia and S.~K. Mitter, ``Stabilization of linear systems with limited
  information,'' \emph{IEEE Trans. Autom. Control}, vol.~46, no.~9, pp.
  1384--1400, 2001.

\bibitem{Coutinho2010input}
D.~F. Coutinho, M.~Fu, and C.~E. de~Souza, ``Input and output quantized
  feedback linear systems,'' \emph{IEEE Trans. Autom. Control}, vol.~55, no.~3,
  pp. 761--766, 2010.

\bibitem{Heemels2010networked}
W.~M.~H. Heemels, A.~R. Teel, N.~van~de Wouw, and D.~Ne{\v{s}}i{\'c},
  ``Networked control systems with communication constraints: {T}radeoffs
  between transmission intervals, delays and performance,'' \emph{IEEE Trans.
  Autom. Control}, vol.~55, no.~8, pp. 1781--1796, 2010.

\bibitem{Liberzon2003switching}
D.~Liberzon, \emph{Switching in Systems and Control}.\hskip 1em plus 0.5em
  minus 0.4em\relax Springer Science \& Business Media, 2003.

\bibitem{Forstner2002discrete}
D.~F{\"o}rstner, M.~Jung, and J.~Lunze, ``A discrete-event model of
  asynchronous quantised systems,'' \emph{Automatica}, vol.~38, no.~8, pp.
  1277--1286, 2002.

\bibitem{Hashimoto2017robust}
K.~Hashimoto, S.~Adachi, and D.~V. Dimarogonas, ``Robust safety controller
  synthesis using tubes,'' in \emph{IEEE Conference on Decision and
  Control}.\hskip 1em plus 0.5em minus 0.4em\relax IEEE, 2017, pp. 535--541.

\bibitem{Girard2013low}
A.~Girard, ``Low-complexity quantized switching controllers using approximate
  bisimulation,'' \emph{Nonlinear Analysis: Hybrid Systems}, vol.~10, pp.
  34--44, 2013.

\bibitem{Girard2016safety}
A.~Girard, G.~G{\"o}ssler, and S.~Mouelhi, ``Safety controller synthesis for
  incrementally stable switched systems using multiscale symbolic models,''
  \emph{IEEE Trans. Autom. Control}, vol.~61, no.~6, pp. 1537--1549, 2016.

\bibitem{Saoud2017optimal}
A.~Saoud and A.~Girard, ``Optimal multirate sampling in symbolic models for
  incrementally stable switched systems,'' \emph{Automatica}, 2017.

\bibitem{Astrom2010feedback}
K.~J. Astr{\"o}m and R.~M. Murray, \emph{Feedback Systems: An Introduction for
  Scientists and Engineers}.\hskip 1em plus 0.5em minus 0.4em\relax Princeton
  University Press, 2010.

\bibitem{Karaman2011sampling}
S.~Karaman and E.~Frazzoli, ``Sampling-based algorithms for optimal motion
  planning,'' \emph{The International Journal of Robotics Research}, vol.~30,
  no.~7, pp. 846--894, 2011.

\end{thebibliography}
\end{document}